\newcommand\bcmdtab{\noindent\bgroup\tabcolsep=0pt%
  \begin{tabular}{@{}p{10pc}@{}p{20pc}@{}}}
\newcommand\ecmdtab{\end{tabular}\egroup}
\newcommand{\wasp}{\textsc{wasp}\xspace}
\newcommand{\waspict}{\textsc{wasp-ict}\xspace}
\newcommand{\waspor}{\textsc{wasp-or}\xspace}
\newcommand{\waspone}{\textsc{wasp-one}\xspace}
\newcommand{\waspopt}{\textsc{wasp-opt}\xspace}
\newcommand{\waspcm}{\textsc{wasp-cm}\xspace}
\newcommand{\clasp}{\textsc{clasp}\xspace}
\def\A{\ensuremath{\mathcal{A}}\xspace}
\newtheorem{example}{Example}[section]
\pgfplotsset{
	filter discard warning=false 
	, legend cell align=left
	, minor grid style={loosely dotted, lightgray}
	, major grid style={loosely dashed, lightgray}
}
\def\naf{\ensuremath{\raise.17ex\hbox{\ensuremath{\scriptstyle\mathtt{\sim}}}}\xspace}
 \title[Cautious reasoning in ASP via minimal models and unsatisfiable cores]
       {Cautious reasoning in ASP \\via minimal models and unsatisfiable cores}
  \author[M. Alviano et al.]
{
	MARIO ALVIANO\\
	DEMACS, University of Calabria, Italy\\
	\email{alviano@mat.unical.it}\\
	\and CARMINE DODARO\\
	DIBRIS, University of Genova, Italy\\
	\email{dodaro@dibris.unige.it}\\
	\and MATTI J\"ARVISALO\\
	HIIT, Department of Computer Science, University of Helsinki, Finland\\
	\email{matti.jarvisalo@helsinki.fi}\\
	\and MARCO MARATEA\\	
	DIBRIS, University of Genova, Italy\\
	\email{marco@dibris.unige.it}\\
	\and ALESSANDRO PREVITI\\
	HIIT, Department of Computer Science, University of Helsinki, Finland\\
	\email{alessandro.previti@helsinki.fi}
}
\begin{document}

\label{firstpage}

\maketitle

  \begin{abstract}

Answer Set Programming (ASP) is a logic-based knowledge representation framework, 
supporting---among other reasoning modes---the central task of query answering.
In the propositional case, query answering amounts to computing cautious consequences of the input program among the atoms in a given set of candidates, where a cautious consequence is an atom belonging to all stable models.
Currently, the most efficient algorithms either iteratively verify the existence of a stable model of the input program extended with the complement of one candidate, where the candidate is heuristically selected, or introduce a clause enforcing the falsity of at least one candidate, so that the solver is free to choose which candidate to falsify at any time during the computation of a stable model.
This paper introduces new algorithms for the computation of cautious consequences, with the aim of driving the solver to search for stable models discarding more candidates.
Specifically, one of such algorithms enforces minimality on the set of true candidates, where different notions of minimality can be used, and another takes advantage of unsatisfiable cores computation.
The algorithms are implemented in \textsc{wasp}, and experiments on benchmarks from the latest ASP competitions show that the new algorithms perform better than the state of the art.
(Under consideration for acceptance in TPLP).
  \end{abstract}

  \begin{keywords}
   Answer Set Programming, Cautious Reasoning, Query Answering
  \end{keywords}

\section{Introduction}

Answer set programming (ASP; \citeNP{DBLP:journals/amai/Niemela99,mare-trus-99,DBLP:books/daglib/0040913,DBLP:journals/cacm/BrewkaET11,DBLP:journals/aim/JanhunenN16,DBLP:journals/aim/Lifschitz16}) 
is today a popular logic-based knowledge representation and constraint programming framework with constructs specifically designed for industrial applications \cite{DBLP:conf/aaai/GebserKROSW13,DBLP:journals/tplp/DodaroGLMRS16}.
In ASP, knowledge is expressed via logic programs, also referred to as  ASP programs, or simply programs.
A program encodes a particular problem domain of interest. 
The semantic is given by the so-called stable models 
(or answer sets; \citeNP{gelf-lifs-88,DBLP:journals/ngc/GelfondL91}), which represent the plausible scenarios for the input instance.
As a logic-based knowledge representation framework, ASP systems support several computational tasks, 
many of them characterized by higher complexities than the corresponding problems in classical propositional logic~\cite{DBLP:journals/tods/EiterGM97}.
One  central task is  query answering, which in ASP is supported also in presence of uninterpreted function symbols \cite{DBLP:journals/tplp/AlvianoFL10}.
Query answering can be addressed with two reasoning modes, namely brave and cautious reasoning, providing answers witnessed by some or all stable models, 
respectively.
In particular, cautious reasoning over ASP programs has a significant number of interesting applications, including consistent query answering
\cite{DBLP:journals/tplp/ArenasBC03,DBLP:journals/tplp/MannaRT13}, data integration \cite{DBLP:conf/lpnmr/Eiter05}, multi-context systems \cite{DBLP:conf/ijcai/BrewkaRS07}, and ontology-based reasoning \cite{DBLP:journals/ai/EiterILST08}.

Focusing on the propositional level on which the stable model search algorithm of ASP systems typically operates on, query answering amounts to computing \emph{cautious consequences} of the input program among the atoms in a given set of candidates, where cautious consequences are atoms contained in all stable models.
At a symbolic level, more complex queries can be expressed by means of rules defining intentional predicates, whose ground instances form the set of candidates.
For example, a union of conjunctive queries requires a rule for each conjunction in the union, the body being the conjunction, and the head using a predicate not occurring in the program and collecting all selected variables.

Currently, the most efficient algorithms for cautious reasoning in ASP, as the ones implemented by \clasp~\cite{DBLP:journals/ai/GebserKS12}
 and \wasp~\cite{DBLP:conf/cilc/DodaroAFLRS11,DBLP:conf/lpnmr/AlvianoDFLR13,DBLP:conf/lpnmr/AlvianoDLR15}, are mainly adaptations of techniques proposed for the computation of backbones in the context of propositional logic~\cite{DBLP:journals/aicom/JanotaLM15}.
The common idea behind such algorithms is to perform several iterative calls to an ASP oracle for computing a stable model of the input program subject to algorithm-dependent additional constraints. 
The added constraints inferred from each of the computed stable models are guaranteed to discard some atoms from the set of candidates.
Specifically, state-of-the-art algorithms for cautious reasoning in ASP either (i)~iteratively verify the existence of a stable model of the input program extended with the complement of 
one candidate, where the candidate is heuristically selected, or 
(ii)~introduce an integrity constraint that enforces 
the falsity of at least one candidate, so that the solver is free to choose which candidate to falsify at any time during the computation of a stable model.
Efficient computation is achieved thanks to the incremental stable model search procedure implemented by modern solvers, which possibly reuses several learned clauses from previous computations in order to avoid redundantly repeating some of the already performed computations.

This paper introduces new algorithms for determining the cautious consequences of ASP programs.
The intuition behind the proposed algorithms is to drive the search for stable models so that a high number of candidate atoms is discarded at a time.
In more detail, two different algorithmic approaches are proposed, based on \emph{minimal models} and \emph{unsatisfiable cores}.
The first of the algorithmic approaches introduced in the paper enforces 
minimality on the set of true candidates. This approach in fact constitutes  a meta-algorithm that can be 
instantiated with several minimality notions and different minimality procedures;
two instantiations are presented in this paper, one based on the computation of
subset-minimal models via qualitative preferences
\cite{DBLP:journals/constraints/RosaGM10}, and the other based on the computation of cardinality-minimal models via algorithm \textsc{one} \cite{DBLP:conf/ijcai/AlvianoDR15}
recently proposed in the context of maximum satisfiability (MaxSAT) solving.

The second algorithm for cautious reasoning introduced in this paper takes advantage of unsatisfiable cores computation, which has already proven 
very effective for solving ASP optimization problems \cite{DBLP:journals/tplp/AlvianoD16}.
For some intuition on this approach,
modern ASP solvers take as input a program and a set of assumption literals, and search for a stable model of the given program extending the assumption literals.
If such a stable model does not exist, a set of assumption literals that is sufficient to cause the failure of the search is returned.
Such a sufficient set of assumption literals constitutes an
 unsatisfiable core in the context of ASP optimization \cite{DBLP:conf/ijcai/AlvianoD17}.
Accordingly, an atom is a cautious consequence of a given program if and only if the negation of the atom is an unsatisfiable core.
Hence, the new algorithm searches for stable models falsifying all candidates, with the aim of eliminating all remaining candidates at once. As soon as
no stable model under the current set of assumptions exists, the returned unsatisfiable core is either minimized to a singleton or used to discard candidates.

The new algorithms are implemented in \textsc{wasp}, which provides  a fair basis for comparing the algorithms proposed 
in this paper with the two state-of-the-art cautious reasoning approaches previously implemented in \textsc{wasp}~\cite{DBLP:journals/tplp/AlvianoDR14}.
Indeed, in this way  all ASP oracle calls are realized  by the incremental stable model search procedure provided by \textsc{wasp}, 
enabling a clean comparison of  the  behavior and effectiveness of the algorithms.
The implemented algorithms are empirically tested on benchmarks from the latest ASP competitions \cite{DBLP:journals/ai/CalimeriGMR16,DBLP:journals/jair/GebserMR17}.
The experiments show that the  algorithms proposed in this paper perform better than the state of the art on the available test cases.
Moreover, the efficiency of the new algorithms is confirmed via a comparison with the state-of-art ASP solver \textsc{clasp} \cite{DBLP:journals/ai/GebserKS12}, which also supports
cautious reasoning. While \textsc{clasp}  is known to be typically faster 
than \textsc{wasp} in deciding the existence of stable models (i.e., satisfiability checking), the new cautious reasoning algorithms proposed in this paper,
as implemented in \textsc{wasp}, outperform \textsc{clasp} on the task of cautious reasoning.

This paper is structured as follows. 
Section~\ref{sec:prel} provides necessary background on ASP, including syntax and semantics, and
Section~\ref{sec:sota} includes an overview of existing state-of-the-art algorithms for cautious reasoning. The new algorithms for cautious reasoning are detailed in Section~\ref{sec:newalgo},
and Section~\ref{sec:exp} provides results from an empirical comparison of the new algorithms
and earlier proposed approaches. Before conclusions, 
related work is discussed in Section~\ref{sec:related}.

\section{Preliminaries}
\label{sec:prel}
This section overviews preliminaries on ASP and
the main steps of stable model search as necessary for the rest of the paper.

\paragraph{Syntax.}
Let \A be a fixed, countable set of (propositional) \emph{atoms} including $\bot$.
A \emph{literal} $\ell$ is either an atom $p$, or its negation $\naf p$, where \naf denotes \emph{default negation}.
Let $\overline{\ell}$ denote the complement of $\ell$, i.e., $\overline{p} := \naf p$, and $\overline{\naf p} := p$, for all $p \in \A$.
For a set $L$ of literals, $\overline{L} := \{\overline{\ell} \mid \ell \in L\}$, $L^+ := L \cap \A$, and $L^- := \overline{L} \cap \A$.
A \emph{program} $\Pi$ consists of a set of \emph{disjunctive} and \emph{choice rules}, which have respectively the following forms:
\begin{align}
    \label{eq:rule}
        p_1 \vee \cdots \vee p_m \leftarrow \ell_1, \ldots, \ell_n \\
    \label{eq:choice}
    \{\ell_1, \ldots, \ell_n\} \geq k
\end{align}
where $p_1, \ldots, p_m$ are distinct atoms, $\ell_1, \ldots, \ell_n$ are distinct literals, and $k,m,n \geq 0$.
For a rule $r$ of the form (\ref{eq:rule}), the disjunction $p_1 \vee \cdots \vee p_m$ is called the \emph{head} of $r$, denoted $H(r)$;
the conjunction $\ell_1, \ldots, \ell_n$ is the \emph{body} of $r$, denoted $B(r)$.
With a slight abuse of notation, $H(r)$ and $B(r)$ also denote the sets of their elements.
A \emph{constraint} is a rule of the form (\ref{eq:rule}) such that $H(r) \subseteq \{\bot\}$. 
For a rule $r$ of the form (\ref{eq:choice}), let $\mathit{lits}(r)$ be $\{\ell_1,\ldots,\ell_n\}$, and $\mathit{bound}(r)$ be $k$. 

\paragraph{Semantics.}
An \emph{interpretation} $I$ is a subset of $\A \setminus \{\bot\}$;
atoms in $I$ are assigned true, and those in $\A \setminus I$ are assigned false.
$I$ is a model of a rule $r$ of the form (\ref{eq:rule}), denoted $I \models r$, if $H(r) \cap I \neq \emptyset$ whenever $B(r)^+ \subseteq I$ and $B(r)^- \cap I = \emptyset$.
$I$ is a model of a rule $r$ of the form (\ref{eq:choice}), denoted $I \models r$, if $|\mathit{lits}(r)^+ \cap I| + |\mathit{lits}(r)^- \setminus I| \geq \mathit{bound}(r)$ holds.
$I$ is a model of a program $\Pi$, denoted $I \models \Pi$, if it is a model of all rules in $\Pi$.
The definition of stable model is based on the following notion of reduct.
Let $\Pi$ be a program, and $I$ be an interpretation.
The \emph{reduct} of $\Pi$ with respect to $I$, denoted $\Pi^I$, is obtained from $\Pi$ by deleting each rule $r$ of the form (\ref{eq:rule}) such that $B(r)^- \cap I \neq \emptyset$, by replacing each rule $r$ of the form (\ref{eq:choice}) with rules of the form $p \leftarrow$ for all $p \in \mathit{lits}(r) \cap I$, and removing negated atoms in the remaining rules.
An interpretation $I$ is a \emph{stable model} of $\Pi$ if $I \models \Pi^I$ and there is no $J \subset I$ such that $J \models \Pi^I$.
Let $\mathit{SM}(\Pi)$ denote the set of stable models of $\Pi$.
If $\mathit{SM}(\Pi) \neq \emptyset$ then $\Pi$ is \emph{coherent}, otherwise it is \emph{incoherent}. 
An atom $p \in \A$ is a \emph{cautious consequence} of a program $\Pi$ if $p$ belongs to all stable models of $\Pi$.
The set of cautious consequences of $\Pi$ is denoted $\mathit{CC}(\Pi)$.

\begin{example}[Running example]\label{ex:program}
Let $\Pi_\mathit{run}$ be the following program:
\begin{align*}
\begin{array}{llll}
r_1: \quad a \ \vee \	b \leftarrow	\qquad& r_2: \quad c \ \vee \ d  	\leftarrow \qquad& r_3: \quad q_1 \leftarrow a 		 	  \qquad	   & 	r_4: \quad  q_1 \leftarrow b\\
r_5: \quad q_2 \leftarrow c 			\qquad & r_6: \quad q_3 \leftarrow \naf 	c \qquad &	r_7: \quad q_3 \leftarrow \naf d 		\qquad 		& 	r_8: \quad q_4 \leftarrow d.
\end{array}
\end{align*}
$\mathit{SM}(\Pi_\mathit{run})$ comprises the following stable models:
$I_1 := \{a,d,q_1,q_3,q_4\}$, $I_2 := \{a,c,q_1,q_2,q_3\}$, $I_3 := \{b, d, q_1, q_3, q_4\}$, and $I_4 := \{b, c, q_1, q_2, q_3\}$.
Therefore, the set $\mathit{CC}(\Pi_\mathit{run})$ of cautious consequences of $\Pi_\mathit{run}$ is $\{q_1, q_3\}$.
\hfill$\lhd$
\end{example}

\paragraph{Stable model search.}
State-of-the-art algorithms for computing stable models of a given ASP program $\Pi$
extend the
conflict-driven clause learning (CDCL) algorithm \cite{DBLP:journals/tc/Marques-SilvaS99} with ASP-specific
search techniques \cite{DBLP:journals/aim/KaufmannLPS16}.
The input of the algorithm comprises a propositional program $\Pi$, and a set $A$ of literals, called \textit{assumption literals} (or simply assumptions).
Its output is a pair $(I,\emptyset)$, where $I$ is a stable model of $\Pi$ such that $A^+ \subseteq I$ and $A^- \cap I = \emptyset$, if such an $I$ does exist.
Otherwise, if there is no $I \in SM(\Pi)$ such that  $A^+ \subseteq I$ and $A^- \cap I = \emptyset$, the algorithm returns as output a set $(\bot, C)$, where $C \subseteq A$ is such that $\mathit{SM}(\Pi \cup \{\bot \leftarrow \overline{\ell} \mid \ell \in C\}) = \emptyset$;
such a set $C$ is called an \emph{unsatisfiable core} of $\Pi$ with respect to $A$.
In what follows, $\mathrm{ComputeStableModel}(\Pi,A)$ denotes a call to the stable model search algorithm.

\begin{example}[Continuing Example~\ref{ex:program}]\label{ex:assum}
$\mathrm{ComputeStableModel}(\Pi_\mathit{run}, \{b, \naf q_2\})$ necessarily returns $(I_3,\emptyset)$, where $I_3$ is the stable model $\{b, d, q_1, q_3, q_4\}$, while $\mathrm{ComputeStableModel}(\Pi_\mathit{run}, \{\naf q_1, \naf q_2\})$ must return $(\bot,C)$, where $C$ is either the unsatisfiable core $\{\naf q_1\}$, or the unsatisfiable core $\{\naf q_1, \naf q_2\}$.
\hfill$\lhd$
\end{example}

\section{Existing algorithms for cautious reasoning in ASP}
\label{sec:sota}

\begin{algorithm}[t]
	\caption{CautiousReasoning(\textsc{refine}, $\Pi$,$Q$)}\label{alg:cautious}
	
	$U := \emptyset;\quad O := Q$\;
	$(I, C) := \mathrm{ComputeStableModel}(\Pi, \emptyset)$\; 
	\lIf{$I = \bot$}{
		\Return{$\bot$\;}
	}
	$O$ := $O \cap I$\;
	\lWhile{$U \neq O$}{
	    \textsc{refine}\tcp*{\textsc{ict}, \textsc{or}, \textsc{min}, \textsc{cm}, or other procedure}
	}
	\Return{$U$}\;
\end{algorithm}
Algorithm~\ref{alg:cautious} provides a common skeleton shared by several algorithms for computing cautious consequences of ASP programs \cite{DBLP:journals/tplp/AlvianoDR14}.
These algorithms take as input a program $\Pi$ and a set of atoms $Q$, and produce as output either $Q \cap \mathit{CC}(\Pi)$ (i.e., the largest subset of $Q$ that only contains cautious consequences of $\Pi$) in case $\Pi$ is coherent, or $\bot$ when $\Pi$ is incoherent.
To this aim, an underestimation $U$ and an overestimation $O$ are initially set to $\emptyset$ and $Q$, respectively, and updated during computation.
A first coherence test of $\Pi$ is performed by means of function $\mathrm{ComputeStableModel}$ (line~2), so that $\bot$ is returned if $\Pi$ is incoherent (line~3).
Otherwise, if a stable model $I$ is found, it is used to possibly reduce the overestimation (line~4), as atoms not in $I$ cannot be cautious consequences of $\Pi$.

\begin{example}[Execution of Algorithm~\ref{alg:cautious}]\label{ex:firstmodel}
Consider program $\Pi_\mathit{run}$ from Example~\ref{ex:program}, and the execution of $\mathrm{CautiousReasoning}$$(\textsc{refine},$ $\Pi_\mathit{run}, \{q_1, q_2,$ $q_3, q_4\})$.
The call to $\mathrm{ComputeStableModel}(\Pi_\mathit{run}, \emptyset)$ at line~2 returns a stable model of $\Pi_\mathit{run}$, say $I_1$.
Hence, $O$ is updated to $\{q_1, q_3, q_4\}$, i.e., the candidate $q_2$ is discarded.
$\hfill \lhd$	
\end{example}

After the  initial steps, the underestimation and the overestimation are refined until they are equal, both representing exactly the cautious consequences of $\Pi$ (lines~5--6).
The way the estimations are iteratively refined depends on the cautious reasoning algorithm (or strategy) at hand.
Two previously proposed and implemented strategies, referred to as \emph{overestimate reduction} (\textsc{or}) and \emph{iterative coherence testing} (\textsc{ict}), are detailed next.

\begin{procedure}[t]
	\caption{\textsc{or}()}
	$(I, C) := \mathrm{ComputeStableModel}(\Pi \cup \{\bot \leftarrow O\}, \emptyset)$\;
	\lIf{$I = \bot$}{
		$U := O$\;
	}
	\lElse{
		$O := O \cap I$\;
	}
\end{procedure}
Strategy \textsc{or} iteratively searches for new stable models to improve the overestimation $O$.
To this aim, a constraint of the form $\bot \leftarrow O$ is added to $\Pi$, so that at least one of the atoms in $O$ must be false in the computed stable model.
If such a stable model does not exist, then the underestimation is set equal to the overestimation to terminate the computation.

\begin{example}[Execution of \textsc{or}; continuing Example~\ref{ex:firstmodel}]
A stable model of $\Pi_\mathit{run} \cup \{\bot \leftarrow q_1, q_3, q_4\}$ is searched for.
Say that $I_4$ is found. Then $O$ is set to $\{q_1, q_3\}$, and a stable model of $\Pi_\mathit{run} \cup \{\bot \leftarrow q_1, q_3\}$ is searched for.
However, the program is incoherent, and therefore $U$ is set to $\{q_1, q_3\}$, and the algorithm terminates. $\hfill \lhd$	
\end{example}

\begin{procedure}[t]
	\caption{\textsc{ict}()}
	$a := \mathrm{OneOf}(O \setminus U)$\;
	$(I, C) := \mathrm{ComputeStableModel}(\Pi, \{\naf a\})$\;
	\lIf{$I = \bot$}{
		$U$ := $U \cup \{a\}$\;
	}
	\lElse{
		$O$ := $O \cap I$\;
	}
\end{procedure}
Strategy \textsc{ict} can improve both estimations during its computation.
In fact, one cautious consequence candidate, say $a$, is selected by a heuristic function $\mathrm{OneOf}$.
The complement of the selected candidate is put in the assumption literals in order to search for a stable model of $\Pi$ not containing $a$.
If such a stable model does not exist, then the underestimation can be improved by adding $a$.
Otherwise, the stable model found is used to improve the overestimation.

\begin{example}[Execution of \textsc{ict}; continuing Example~\ref{ex:firstmodel}]
Say that $\mathrm{OneOf}(\{q_1,q_3,q_4\})$ selects $q_1$.
$\mathrm{ComputeStableModel}(\Pi_\mathit{run}, \{\naf q_1\})$ returns $(\bot,\{\naf q_1\})$ because there are no stable models of $\Pi_\mathit{run}$ where $q_1$ is false.
Hence $q_1$ is added to $U$.
After that, say that $\mathrm{OneOf}(\{q_3,q_4\})$ selects $q_4$.
$\mathrm{ComputeStableModel}(\Pi_\mathit{run}, \{\naf q_4\})$ returns a stable model, say $I_4 = \{b, c, q_1, q_2, q_3\}$.
Hence, $O$ is set to $\{q_1, q_3\}$, and $\mathrm{OneOf}(\{q_3\})$ has to return $q_3$.
Finally, $\mathrm{ComputeStableModel}(\Pi_\mathit{run}, \{\naf q_3\})$ returns $(\bot, \{\naf q_3\})$, and therefore $q_3$ is added to $U$, and the algorithm terminates returning $\{q_1, q_3\}$. $\hfill \lhd$
\end{example}

\section{New algorithms for cautious reasoning in ASP}
\label{sec:newalgo}

The skeleton provided by Algorithm~\ref{alg:cautious} can be used by strategies other than \textsc{ict} and \textsc{or}, as long as the estimations are iteratively updated and are guaranteed to converge to the set of cautious consequences of the input program $\Pi$ among the candidates in $Q$.
In this section, new strategies are presented, with the aim of discarding several candidates at once and thereby speeding up the convergence of the estimations.
Specifically, one of the new strategies is based on the search of minimal stable models, and actually results into a meta-algorithm that can be instantiated with different search procedures (Section~\ref{sec:minimal}).
The other strategy introduced here is based on an alternative characterization of cautious consequences in terms of the notion of unsatisfiable cores used in this paper~(Section~\ref{sec:cm}).

\subsection{Cautious reasoning via minimal models}\label{sec:minimal}

\begin{procedure}[t]
	\caption{\textsc{min}()}
	$I := \mathrm{ComputeMinStableModel}(\Pi, O)$\;
	\lIf{$I \supseteq O$} {
		$U := O$\;
	}
	\lElse {
		$O := O \cap I$\;
	}
\end{procedure}
A stable model $I$ of a program $\Pi$ is said to be \emph{minimal} with respect to a set $O$ of objective atoms if there is no $I' \in \mathit{SM}(\Pi)$ such that $I' \cap O \subset I \cap O$.
Strategy \textsc{min} takes advantage of this notion, specifically of function $\mathrm{ComputeMinStableModel}$, whose input are a coherent program $\Pi$ and a set $O$ of objective atoms, and whose output is a minimal stable model of $\Pi$.
Armed with such a function, strategy \textsc{min} searches for stable models of $\Pi$ that are minimal with respect to the current overestimation $O$.
The stable models returned by $\mathrm{ComputeMinStableModel}$ either discard some candidate from $O$, which would lead to a new iteration of the strategy, or are such that all atoms in $O$ are true.
In the latter case, the set $O$ only contains cautious consequences of $\Pi$, and therefore the underestimation is updated to terminate the algorithm.

\begin{example}[Execution of \textsc{min}; continuing Example~\ref{ex:firstmodel}]
Say that $\mathrm{ComputeMinStableModel}(\Pi_\mathit{run}, \{q_1, q_3, q_4\})$ returns $I_2 = \{a,c,q_1,q_2,q_3\}$.
Hence $O$ is set to $\{q_1, q_3\}$.
Say that the next call to $\mathrm{ComputeMinStableModel}(\Pi_\mathit{run}, \{q_1, q_3\})$ returns $I_3 = \{b, d, q_1, q_3, q_4\}$.
Since $I_3 \supseteq O$, $U$ is set to $\{q_1, q_3\}$, which is returned by the algorithm. $\hfill \lhd$	
\end{example}

\begin{restatable}{theorem}{ThmCorr}
Given a program $\Pi$, and a set $Q$ of atoms, $\mathrm{CautiousReasoning}(\textsc{min},\Pi,Q)$ terminates and returns either $\bot$ (if $\mathit{SM}(\Pi) = \emptyset$) or 
$\mathit{CC}(\Pi) \cap Q$ (otherwise).
\end{restatable}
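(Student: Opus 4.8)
The plan is to separate the incoherent and coherent cases, and for the latter to exhibit loop invariants that fix the meaning of the estimations $U$ and $O$, then derive termination from a decreasing measure and correctness from the exit condition. The incoherent case ($\mathit{SM}(\Pi) = \emptyset$) is immediate: the call $\mathrm{ComputeStableModel}(\Pi,\emptyset)$ at line~2 of Algorithm~\ref{alg:cautious} returns $I = \bot$, so line~3 returns $\bot$, as required.

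For the coherent case I would first establish two invariants, maintained from line~4 onward and preserved by every call to \textsc{min}: (Inv1)~$U \subseteq \mathit{CC}(\Pi) \cap Q$, and (Inv2)~$\mathit{CC}(\Pi) \cap Q \subseteq O \subseteq Q$. Both hold right after line~4, since $U = \emptyset$, $O = Q \cap I$ for some $I \in \mathit{SM}(\Pi)$, and every cautious consequence lies in $I$, whence $\mathit{CC}(\Pi) \cap Q \subseteq O \subseteq Q$. For preservation, observe that \textsc{min} computes $I := \mathrm{ComputeMinStableModel}(\Pi,O)$, a stable model minimal with respect to $O$; coherence of $\Pi$ guarantees such an $I$ exists and is not $\bot$. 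In the branch $O \not\subseteq I$, only $O := O \cap I$ is executed, and since $\mathit{CC}(\Pi) \subseteq I$ we keep $\mathit{CC}(\Pi) \cap Q \subseteq O \cap I \subseteq Q$, so (Inv2) survives and (Inv1) is untouched.

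The crux is the branch $I \supseteq O$, which both preserves the invariants and secures correctness. Here I would prove the key claim: if $I$ is minimal with respect to $O$ and $O \subseteq I$, then $O \subseteq \mathit{CC}(\Pi)$. Indeed, $O \subseteq I$ gives $I \cap O = O$, and minimality forbids any $I' \in \mathit{SM}(\Pi)$ with $I' \cap O \subset I \cap O$; as $I' \cap O \subseteq O$ always holds, every stable model must satisfy $I' \cap O = O$, i.e. $O \subseteq I'$, so every atom of $O$ lies in every stable model and $O \subseteq \mathit{CC}(\Pi)$. Together with (Inv2), which supplies $O \subseteq Q$ and $\mathit{CC}(\Pi) \cap Q \subseteq O$, this gives $O = \mathit{CC}(\Pi) \cap Q$. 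The assignment $U := O$ then makes $U = O = \mathit{CC}(\Pi) \cap Q$, so (Inv1) still holds, the guard $U \neq O$ fails, and the loop returns the correct value.

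It remains to argue termination. Each pass through \textsc{min} takes exactly one branch: the branch $I \supseteq O$ sets $U := O$ and exits, while the branch $O \not\subseteq I$ replaces $O$ with $O \cap I \subset O$, the inclusion being proper since some atom of $O$ is missing from $I$. As $O \subseteq Q$ is finite, the measure $|O|$ strictly decreases at each non-terminating pass and can do so only finitely often; at the latest, once $O = \emptyset$ every returned model satisfies $I \supseteq O$ and the loop halts. I expect the single delicate step to be the key claim on minimal models: the remaining work is bookkeeping on the invariants, but correctness rests entirely on exploiting minimality with respect to $O$ to turn $O \subseteq I$ into $O \subseteq \mathit{CC}(\Pi)$.
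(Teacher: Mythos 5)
Your proposal is correct and follows essentially the same route as the paper's proof: the central step in both is the dichotomy that a model minimal with respect to $O$ either misses some atom of $O$ (shrinking the overestimation) or contains all of $O$, in which case minimality forces every stable model to contain $O$, so $O = \mathit{CC}(\Pi) \cap Q$. Your explicit invariants (Inv1)--(Inv2) and the decreasing measure $|O|$ merely spell out bookkeeping the paper leaves implicit.
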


\begin{proof}
	Let $\Pi$ be coherent (otherwise correctness is immediate).
	Termination is guaranteed by the fact that at each iteration a stable model $I$ is found, and either $O$ is shrunk (if $I \not\supseteq O$), or $U$ is set equal to $O$.
	In fact, if $O \neq \mathit{CC}(\Pi) \cap Q$, then there is a stable model $I$ of $\Pi$ such that $I \not\supseteq O$, and therefore any stable model $I'$ of $\Pi$ such that $I' \supseteq O$ is not minimal with respect to $O$;
	hence, in this case $O$ is shrunk.
	Otherwise, if $O = \mathit{CC}(\Pi) \cap Q$, then every minimal stable model $I$ of $\Pi$ with respect to $O$ must be such that $I \supseteq O$, and therefore $O$ is the answer to be provided in output.
	\hfill
\end{proof}

\textsc{min} is actually a meta-algorithm that can be instantiated in several ways by using different versions of function $\mathrm{ComputeMinStableModel}$.
Two versions of this function are presented, one based on an algorithm developed in the context of qualitative preferences (\textsc{opt}; \citeNP{DBLP:journals/constraints/RosaGM10}) and another based on a MaxSAT algorithm (\textsc{one}; \citeNP{DBLP:conf/ijcai/AlvianoDR15}).

\begin{function}[t]
	\caption{\textsc{opt}($\Pi$, $O$)}\label{alg:opt}
    
    \SetKwFor{Loop}{loop}{}{}
	
	$A := [\,]$\tcp*{stack of assumptions, also used as a set}
	\Loop{}{
	    \While(\tcp*[f]{unassigned candidates?}){$O \setminus (A \cup \overline{A}) \neq \emptyset$}{
	        $A.\mathit{push}(\naf \mathit{OneOf}(O \setminus (A \cup \overline{A})))$\;
	        $\mathit{propagate}(\Pi,A)$\;
	    }
    	$(I,C) := \mathit{ComputeStableModel}(\Pi, A)$\;
    	\lIf(\tcp*[f]{$I$ is a minimal model}){$I \neq \bot$}{\Return $I$}
		\lWhile(\tcp*[f]{restore consistency}){$C \not\subseteq A$}{$A.\mathit{pop}()$}
	}
\end{function}
\medskip\noindent\textsc{opt}.
Subset-minimality is among the qualitative preferences analyzed by \citeANP{DBLP:journals/constraints/RosaGM10}~\citeyear{DBLP:journals/constraints/RosaGM10}, and addressed by modifying the search procedure of a SAT solver;
the resulting algorithm is referred to as \textsc{opt} in the literature, and was also adapted to ASP \cite{Alviano01092015,DBLP:conf/aaai/GebserKROSW13}.
Intuitively, the branching heuristic of the solver is forced to select $\overline{p}$ for $p \in O$, where $O$ is the set of objective atoms, before any other unassigned literal.
In this way, the search is driven to falsify as many atoms in $O$ as possible.
When all atoms in $O$ are assigned, standard stable model search procedure is applied without subjecting the branching heuristic to any further guidance.
Hence, if a stable model is found, it is guaranteed to be minimal with respect to the set of objective atoms \cite{DBLP:journals/constraints/RosaGM10}.
If instead the current assignment to atoms in $O$ cannot be extended to a stable model, then a conflict involving some objective atom has to be detected.
In this case, the assignment of some atom in $O$ is flipped because of the constraint learned by analyzing the conflict, and hence the procedure is repeated with a different assignment for the objective atoms.
Function \textsc{opt} reports such a strategy.
A stack $A$ of assumption literals, initially empty (line~1), is populated with complements of candidates in $O$ (line~4).
After each insertion, $\mathit{propagate}(\Pi,A)$ is used to extend $A$ with (unit) implied literals, if possible;
otherwise, in case of conflict, $\Pi$ is extended with a learned clause, some literal in $A$ is flipped, and propagation is repeated.
When all atoms in $O$ occur in $A$, a stable model extending $A$ is searched (line~6).
If it is found, it is returned (line~7).
Otherwise, some literals in $A$ are removed so to not incur again in the returned unsatisfiable core (line~8).

\begin{example}[Execution of \textsc{opt}; continuing Example~\ref{ex:firstmodel}]
Say that $\textsc{opt}(\Pi_\mathit{run},\{q_1,q_3,q_4\})$ starts by selecting $\naf q_1$.
Hence $\naf a$ and $\naf b$ are inferred via $r_3$ and $r_4$, respectively.
However, this is not possible due to $r_1$, that is, there is a conflict.
The program is therefore extended with the learned constraint $\bot \leftarrow \naf q_1$, so that $~q_1$ cannot be selected anymore by the branching heuristic.
Now, say that $\naf q_4$ is selected.
Literal $\naf d$ is inferred via $r_8$, and therefore $c$ and $q_3$ are inferred via $r_2$ and $r_7$, respectively.
(In addition, $q_2$ is then inferred via $r_5$.)
After that, the branching heuristic is not subject to any guidance, and a stable model of $\Pi_\mathit{run}$, say $I_2 = \{a,c,q_1,q_2,q_3\}$, is returned.
$\hfill \lhd$
\end{example}

\begin{function}[t]
	\caption{\textsc{one}($\Pi$, $O$)}\label{alg:one}
    
    \SetKwFor{Loop}{loop}{}{}
	
	$S := \overline{O}$\;
	\Loop{}{
    	$(I,C) := \mathit{ComputeStableModel}(\Pi, S)$\;\label{alg:one:ln:solve}
    	\lIf{$I \neq \bot$}{\Return $I$\;}
		Let $C$ be $\{\ell_0,\ldots, \ell_n\}$ (for some $n \geq 0$), and $p_1,\ldots,p_n$ be $|C|-1$ fresh atoms\;\label{alg:one:ln:analyze}
		$S := (S \setminus C) \cup \{\naf p_1, \ldots, \naf p_n\}$\;
		$\Pi := \Pi \cup \{\{\ell_0,\ldots,\ell_n, p_1,\ldots,p_n\} \geq n\} \cup \{\bot \leftarrow p_i, \naf p_{i-1} \mid i \in [2..n]\}$\;
	}
\end{function}
\noindent{\textsc{one}.}
MaxSAT algorithms, being the state-of-the-art for the computation of cardinality-minimal models, can be also used to compute subset-minimal models. Specifically,
the  algorithm \textsc{one} has been used to enumerate models of circumscribed theory \cite{DBLP:journals/tplp/Alviano17}.
Simplified to the setting of this paper, function \textsc{one} takes as input a coherent program $\Pi$ and a set $O$ of objective atoms, and returns a minimal stable model of $\Pi$ with respect to $O$.
Hence, it is a valid instantiation of $\mathrm{ComputeMinStableModel}$.
In more detail, \textsc{one} keeps a set $S$ of \emph{soft literals} to be maximized, initially set to the complements of the atoms in $O$ (line~1).
At each step of computation, a stable model of $\Pi$ subject to the assumptions $S$ is searched (line~3), and eventually returned (line~4).
When the search fails, instead, soft literals in the computed unsatisfiable core are replaced by fresh literals (line~6), and a choice rule enforcing the satisfaction of at least $n$ literals among those in the unsatisfiable core and those fresh is added to the program (line~7).
Since the next stable model search is forced to assign false to the fresh atoms, the choice rule actually drives the search to a stable model containing at least $n$ literals among those in the unsatisfiable core.
Additionally, constraints of the form $\bot \leftarrow p_i, \naf p_{i-1}$ are added to the program to eliminate symmetric solutions.

\begin{example}[Execution of \textsc{one}; continuing Example~\ref{ex:firstmodel}]
$\textsc{one}(\Pi_\mathit{run},\{q_1,q_3,q_4\})$ initially searches for a stable model of $\Pi_\mathit{run}$ subject to the assumption literals $\{\naf q_1, \naf q_3, \naf q_4\}$.
Say that $(\bot,\{\naf q_1,\naf q_4\})$ is returned.
$\Pi_\mathit{run}$ is extended with $\{\naf q_1, \naf q_4, p_1\} \geq 1$,
and $S$ is modified to $\{\naf q_3, \naf p_1\}$.
Say that the next stable model search returns $(\bot,\{\naf q_3\})$, so that $S$ is now $\{\naf p_1\}$.
The next stable model search returns a model, say $I_2 = \{a,c,q_1,q_2,q_3\}$.
$\hfill \lhd$
\end{example}

\subsection{Cautious reasoning via unsatisfiable core minimization}\label{sec:cm}

Unsatisfiable cores, as considered in this paper, provide an alternative characterization of the notion of cautious consequences,  formalized as
the following proposition.

\begin{restatable}{proposition}{PropCores}\label{th:cores}
Assume that $\Pi$ is a program and $A$ is a set of atoms.
Then $p \in \mathit{CC}(\Pi) \cap A$ if and only if $\{\naf p\}$ is an unsatisfiable core of $\Pi$ with respect to $A$.
\end{restatable}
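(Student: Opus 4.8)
The plan is to unfold the definition of unsatisfiable core and reduce the claim to a single structural fact about adding a constraint to a program. First I would fix the reading of the statement: since $A$ is a set of atoms while the notion of unsatisfiable core from Section~\ref{sec:prel} is defined with respect to a set of assumption \emph{literals}, the phrase ``$\{\naf p\}$ is an unsatisfiable core of $\Pi$ with respect to $A$'' is to be understood with respect to the assumption set $\overline{A} = \{\naf q \mid q \in A\}$, in line with the use of the assumptions $\overline{O}$ in the core-based strategy. Under this reading, $\{\naf p\}$ being an unsatisfiable core amounts to two conditions: (i)~$\{\naf p\} \subseteq \overline{A}$, which holds exactly when $p \in A$; and (ii)~$\mathit{SM}(\Pi \cup \{\bot \leftarrow p\}) = \emptyset$, since the constraint associated with the core $\{\naf p\}$ is $\bot \leftarrow \overline{\naf p}$, i.e.\ $\bot \leftarrow p$.

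The key step is to prove the following filtering property:
\[
\mathit{SM}(\Pi \cup \{\bot \leftarrow p\}) = \{I \in \mathit{SM}(\Pi) \mid p \notin I\}.
\]
I would establish this by analysing how the constraint $\bot \leftarrow p$ behaves under the reduct. As its body contains no negated atom, it is never deleted when forming $\Pi^I$ and is left unchanged, so $(\Pi \cup \{\bot \leftarrow p\})^I = \Pi^I \cup \{\bot \leftarrow p\}$, and an interpretation satisfies $\bot \leftarrow p$ precisely when $p \notin I$ (because $\bot$ can never belong to an interpretation). For the right-to-left inclusion, if $I \in \mathit{SM}(\Pi)$ with $p \notin I$, then $I \models \Pi^I \cup \{\bot \leftarrow p\}$, and subset-minimality of $I$ with respect to $\Pi^I$ a fortiori yields minimality with respect to the larger reduct. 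For the converse inclusion, any candidate $J \subset I$ witnessing non-minimality with respect to $\Pi^I$ would still satisfy $\bot \leftarrow p$ (since $p \notin I \supseteq J$), contradicting minimality of $I$ with respect to $\Pi^I \cup \{\bot \leftarrow p\}$; hence $I \in \mathit{SM}(\Pi)$, while $p \notin I$ follows from $I \models \bot \leftarrow p$.

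With the filtering property in hand, the proposition is a chain of equivalences. By that property, condition~(ii), $\mathit{SM}(\Pi \cup \{\bot \leftarrow p\}) = \emptyset$, is equivalent to ``no stable model of $\Pi$ omits $p$'', i.e.\ to every stable model of $\Pi$ containing $p$, which is exactly $p \in \mathit{CC}(\Pi)$; this covers the incoherent case $\mathit{SM}(\Pi) = \emptyset$ uniformly, since then both conditions hold vacuously for every atom. Conjoining with condition~(i), $\{\naf p\}$ is an unsatisfiable core of $\Pi$ with respect to $A$ if and only if $p \in A$ and $p \in \mathit{CC}(\Pi)$, that is, if and only if $p \in \mathit{CC}(\Pi) \cap A$, as required.

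The main obstacle I anticipate is the filtering property, and specifically the subset-minimality bookkeeping in the reduct: one must check that adding the constraint neither creates spurious new minimal models nor destroys existing ones. Everything else is definition chasing; the only point worth flagging explicitly is the reinterpretation of ``with respect to $A$'' as ``with respect to $\overline{A}$'', which is what makes the singleton core $\{\naf p\}$ type-correct.
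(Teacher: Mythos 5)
Your proof is correct and takes essentially the same route as the paper's: the paper argues by exactly your chain of equivalences---$p \in \mathit{CC}(\Pi) \cap A$ iff every stable model contains $p$ iff the program extended with the constraint induced by the core is incoherent iff $\{\naf p\}$ is an unsatisfiable core, with the same vacuous-truth remark for incoherent $\Pi$---merely asserting the middle step that you verify at the level of reducts via your filtering property. Your two points of extra care are both warranted and match the intended reading: ``with respect to $A$'' must indeed be taken as with respect to the assumptions $\overline{A}$ (as in procedure \textsc{cm}, where the assumptions are complements of candidate atoms), and the constraint induced by the core $\{\naf p\}$ is $\bot \leftarrow p$, since the definition adds $\bot \leftarrow \overline{\ell}$ for each core literal $\ell$ (the paper's proof writes $\bot \leftarrow \naf p$ in the middle term, which by its own definitions would filter the stable models the wrong way and is evidently a slip that your version silently corrects).
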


\begin{proof}
	Follows from $p \in \mathit{CC}(\Pi) \cap A$ if and only if
	$p \in I$ for all $I \in \mathit{SM}(\Pi)$ if and only if
	$\mathit{SM}(\Pi \cup \{\bot \leftarrow \naf p\}) = \emptyset$ if and only if
	$\{\naf p\}$ is an unsatisfiable core of $\Pi$ with respect to $A$.
	Note that the claim also holds for incoherent programs:
	$\mathit{SM}(\Pi) = \emptyset$ implies $\mathit{CC}(\Pi) = \mathcal{A}$, and $\mathit{SM}(\Pi \cup \{\bot \leftarrow \naf p\}) = \emptyset$ for all $p \in \mathcal{A}$.
	\hfill
\end{proof}

An immediate consequence of Proposition~\ref{th:cores} is that a subset-minimal unsatisfiable core containing more than one atom can be used to discard candidates, since
none of the atoms in the unsatisfiable core are in fact cautious consequences of the input program in this case.
Algorithm \textsc{cm} takes advantage of this fact, and searches for subset-minimal unsatisfiable cores.
If a singleton is found, a cautious consequence of the input program is identified.
Otherwise, all atoms in the unsatisfiable core are discarded.
Actually, in order to compute a subset-minimal unsatisfiable core, an initial unsatisfiable core is minimized by searching for stable models containing all atoms in the unsatisfiable core but one.
It turns out that such a stable model can be already used to discard several candidates, even before the minimization process is completed.

\begin{procedure}[t]
	\caption{\textsc{cm}()}
	$C := \overline{O \setminus U}$;\qquad $C' := \emptyset$\;
	\While{$C \neq \emptyset$}{
		$(I,C) := \mathrm{ComputeStableModel}(\Pi, C)$\;
		\uIf{$I \neq \bot$}{$O := O \cap I$;\qquad $C := C'$;\qquad $C' := \emptyset$\;}
		\Else{
			$C' := \{\mathrm{OneOf}(C)\}$;\qquad $C := C \setminus C'$\;
		}
	}
	\lIf{$C' \neq \emptyset$}{$U := U \cup \overline{C'}$\;}
\end{procedure}
In more detail, \textsc{cm} stores in $C$ the unsatisfiable core candidate, that is, the assumption literals for stable model searches performed by the algorithm.
Initially, $C$ contains the complements of all unknown candidates (line~1), and a stable model is searched (line~3).
If the search fails, then the unsatisfiable core identified by $\mathrm{ComputeStableModel}$ is partitioned among $C$ and $C'$, where $C'$ only contains one heuristically selected literal (line~7).
This process is repeated until either $C$ is empty, or a stable model is found.
A stable model would necessarily assign all literals in $C$ as true, so that the associated candidates can be discarded at once (line~5);
in this case the only candidate possibly involved in a singleton unsatisfiable core is the one stored in $C'$, if any, and therefore it is moved into $C$ (line~5; if it is involved in an unsatisfiable core, it will be moved back to C' at line~7).
When the while-loop terminates, if $C'$ still contains a single candidate, it is guaranteed to be a singleton unsatisfiable core, and therefore a cautious consequence of the input program (line~8).

\begin{example}[Execution of \textsc{cm}; continuing Example~\ref{ex:firstmodel}]
Initially $C = \{\naf q_1, \naf q_3, \naf q_4\}$ and $C' = \emptyset$.
Say that $\mathrm{ComputeStableModel}(\Pi_\mathit{run}, C)$ returns $(\bot,$ $\{\naf q_1, \naf q_4\})$, and that $\mathrm{OneOf}(\{\naf q_1, \naf q_4\})$ returns $\naf q_1$, so that the next iteration has $C' = \{\naf q_1\}$ and $C = \{\naf q_4\}$.
A stable model is found, say $I_2 = \{a,c,q_1,q_2,$ $q_3\}$, and $O$ is shrunk to $\{q_1,q_3\}$.
Moreover, the next iteration has $C = \{\naf q_1\}$ and $C' = \emptyset$.
Hence, $(\bot, \{\naf q_1\})$ is returned by \linebreak $\mathrm{ComputeStableModel}$, and $q_1$ is added to $U$ (after moving its complement into $C'$).
$\hfill \lhd$	
\end{example}

\begin{restatable}{theorem}{ThmCM}
Given a program $\Pi$, and a set $Q$ of atoms, $\mathrm{CautiousReasoning}(\textsc{cm},\Pi,Q)$ terminates, and returns either $\bot$ (if $\mathit{SM}(\Pi) = \emptyset$) or 
 $\mathit{CC}(\Pi) \cap Q$ (otherwise).
\end{restatable}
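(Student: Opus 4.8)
The plan is to follow the same template as the correctness proof for \textsc{min}: treat the incoherent case as immediate (lines~2--3 of Algorithm~\ref{alg:cautious} return $\bot$ exactly when $\mathit{SM}(\Pi)=\emptyset$), assume $\Pi$ coherent throughout, and then derive partial correctness from loop invariants together with a termination measure. The three invariants I would carry through the outer \textbf{while} loop are $U \subseteq \mathit{CC}(\Pi)\cap Q$, $\mathit{CC}(\Pi)\cap Q \subseteq O$, and $U \subseteq O$. All three hold right after line~4, where $U=\emptyset$ and $O = Q\cap I$ for the first stable model $I$, because every cautious consequence belongs to $I$. Granting the invariants, partial correctness is immediate: the loop exits only when $U=O$, and then $U \subseteq \mathit{CC}(\Pi)\cap Q \subseteq O = U$ pins $U$ to exactly $\mathit{CC}(\Pi)\cap Q$.

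Next I would show each \textsc{cm} call preserves the invariants. The overestimation changes only through $O := O\cap I$ for a stable model $I$; since $\mathit{CC}(\Pi)\subseteq I$ this keeps $\mathit{CC}(\Pi)\cap Q\subseteq O$, and since $U\subseteq\mathit{CC}(\Pi)\subseteq I$ it never drops an element of $U$, so $U\subseteq O$ survives. The underestimation changes only at the last line, which adds $\overline{C'}$ precisely when the loop exits with $C'=\{\ell\}$; the only way that happens is that the preceding $\mathrm{ComputeStableModel}$ call returned $\{\ell\}$ as an unsatisfiable core (necessarily a singleton, since $C\setminus C' = \emptyset$ forced the exit), so by Proposition~\ref{th:cores} the atom $\overline{\ell}$ is a cautious consequence, and $\overline{\ell}\in O\subseteq Q$ places it in $\mathit{CC}(\Pi)\cap Q$. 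Here I would also record the coherence-based fact that every unsatisfiable core met during the run is a nonempty subset of the current assumptions, which is what makes the singleton branch and this final update well defined.

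Termination I would establish in two layers. For the inner \textbf{while} loop of \textsc{cm}, each \textbf{else} iteration removes one selected literal and hence strictly decreases $|C|$, while each \textbf{if} iteration resets $C$ to $C'$, a set of size at most one; a short case split then shows that after any \textbf{if} iteration at most one further iteration can occur before $C=\emptyset$, so the inner loop runs finitely many times. For the outer loop I would prove that every \textsc{cm} call strictly decreases the nonnegative measure $|O\setminus U|$, arguing that each call either enlarges $U$, through the singleton-core \textbf{else} exit that adds some $\overline{\ell}\in O\setminus U$, or strictly shrinks $O$.

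The step I expect to be the main obstacle is precisely this $O$-shrinking half, owing to the bookkeeping between $C$ and $C'$ during core minimization. The clean cases are the first stable-model step and any step whose assumptions $C$ were just returned as an unsatisfiable core: there $\overline{C}\subseteq O\setminus U$ and $O$ is unchanged since $C$ was produced, so the atoms $\overline{C}$ are genuinely deleted by $O:=O\cap I$ and $O$ strictly shrinks. The awkward case is the isolated single-literal re-test reached via $C:=C'$, from which I cannot read progress directly, since that step may leave $O$ unchanged. The resolution I would pursue is that this re-test is always immediately preceded by a stable-model step on the nonempty reduced core, which has already strictly shrunk $O$; hence the call has made progress before the re-test is ever performed. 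Making the auxiliary claims precise---that $O$ is untouched during a run of \textbf{else} iterations, and that $\overline{C}\subseteq O\setminus U$ holds at each freshly derived core---so that no call can spin without reducing $|O\setminus U|$, is where the real care of the proof lies.
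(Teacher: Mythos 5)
Your proposal is correct and takes essentially the same route as the paper's proof: assume $\Pi$ coherent (incoherence handled by the initial call), justify updates of $U$ via Proposition~\ref{th:cores} applied to singleton cores and updates of $O$ via $\mathit{CC}(\Pi) \subseteq I$ for every stable model $I$, establish termination of each \textsc{cm} call by the shrinking of $C$, and terminate the outer loop because every call shrinks $O$ or extends $U$, i.e., strictly decreases $|O \setminus U|$. Your explicit invariants and your analysis of the ``awkward'' singleton re-test reached via $C := C'$ are in fact more careful than the paper's own argument, which glosses that step with the phrase ``possibly used to discard candidates from $O$''.
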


\begin{proof}
	Let $\Pi$ be coherent (otherwise correctness is immediate).
	First of all, any execution of procedure \textsc{cm} terminates.
	In fact, if $\overline{O \setminus U}$ is not an unsatisfiable core of $\Pi$, then the stable model returned by $\mathrm{ComputeStableModel}$ is used to discard all unknown candidates, and $C$ is assigned the empty set.
	Otherwise, if $\overline{O \setminus U}$ is an unsatisfiable core of $\Pi$, $\mathrm{ComputeStableModel}$ returns a possibly smaller unsatisfiable core, which is partitioned among $C$ and the singleton $C'$;
	the process is repeated ($|O| - |U|$ times in the worst case) until $C$ is not an unsatisfiable core, and therefore possibly used to discard candidates from $O$.
	At this point, if $C'$ is an unsatisfiable core, the complement of the literal in $C'$ is a cautious consequence of $\Pi$ because of Proposition~\ref{th:cores}.
	Second, the algorithm itself terminates because each execution of procedure \textsc{cm} shrinks $O$ or extends $U$ (or both);
	hence, \textsc{cm} is executed at most $|O| - |U|$ times.
	Finally, note that no cautious consequence of $\Pi$ can be eliminated at line~5 because $\mathit{CC}(\Pi) \subseteq I$ holds for all $I \in \mathit{SM}(\Pi)$, and therefore all of them are added to $U$ at line~8 during some execution of procedure \textsc{cm}.
	\hfill
\end{proof}

\section{Experiments}
\label{sec:exp}

The new algorithms are implemented in \wasp \cite{DBLP:journals/tplp/AlvianoDR14}, and their performance was measured on all instances from the latest ASP Competitions \cite{DBLP:journals/tplp/CalimeriIR14,DBLP:journals/ai/CalimeriGMR16,DBLP:journals/jair/GebserMR17} involving non-ground queries.
(Ground queries are not included in the experiment because they are usually handled by adding an integrity constraint to the input program, and by searching for a single answer set.)
Instances are grounded by \textsc{gringo} v. 4.4.0 \cite{DBLP:conf/lpnmr/GebserKKS11}, which therefore also produces the set of candidates.
As a reference to the state of the art, \clasp v. 3.3.3 \cite{DBLP:journals/ai/GebserKS12} was also tested;
\clasp implements algorithm \textsc{or}.
The DLV solver is not considered here as its performance on cautious reasoning has been shown in earlier work to be dominated by
the other approaches considered here~\cite{DBLP:journals/tplp/AlvianoDR14}.
The experiments were run on computing nodes with Intel Xeon 2.4-GHz processors and 16 GB of memory.
Per-instance time and memory limits were set to 600 seconds and 15 GB, respectively.
In the following, \textsc{wasp-}$X$ refers to \wasp executing the strategy $X \in \{\textsc{or},$ $\textsc{ict}$, $\textsc{opt},$ $\textsc{one},$ $\textsc{cm}\}$.

\begin{figure}[t]
	\figrule
	\begin{tikzpicture}[scale=0.85]
	\pgfkeys{%
		/pgf/number format/set thousands separator = {}}
	\begin{axis}[
	scale only axis
	, font=\normalsize
	, x label style = {at={(axis description cs:0.5,0.0)}}
	, y label style = {at={(axis description cs:0.0,0.5)}}
	, xlabel={Number of solved instances}
	, ylabel={Per-instance time limit (s)}
	, xmin=18, xmax=200
	, ymin=0, ymax=600
	, legend style={at={(0.12,0.96)},anchor=north, draw=none,fill=none}
	, legend columns=1
	, width=1\textwidth
	, height=0.5\textwidth
	, ytick={0,150,300,450,600}
	, major tick length=2pt
	]

        \addplot [mark size=3pt, color=blue, mark=star] [unbounded coords=jump] table[col sep=semicolon, y index=2] {./cactusnoideal.csv};
        \addlegendentry{\waspict}
	\addplot [mark size=3pt, color=green!50!black, mark=triangle] [unbounded coords=jump] table[col sep=semicolon, y index=3] {./cactusnoideal.csv}; 
	\addlegendentry{\waspor}
	
        \addplot [mark size=3pt, color=orange, mark=o] [unbounded coords=jump] table[col sep=semicolon, y index=5] {./cactusnoideal.csv};
        \addlegendentry{\waspopt}

        \addplot [mark size=3pt, color=black, mark=x] [unbounded coords=jump] table[col sep=semicolon, y index=4] {./cactusnoideal.csv};
        \addlegendentry{\waspone}

        \addplot [mark size=3pt, color=yellow!50!black, mark=square] [unbounded coords=jump] table[col sep=semicolon, y index=6] {./cactusnoideal.csv};
        \addlegendentry{\waspcm}

	\addplot [mark size=3pt, color=red, mark=pentagon] [unbounded coords=jump] table[col sep=semicolon, y index=1] {./cactusnoideal.csv}; 
	\addlegendentry{\clasp}

	\end{axis}
	\end{tikzpicture}

	\caption{Performance comparison on non-ground queries in ASP Competitions: number of solved instances within a given per-instance time limit.}\label{fig:cactus}
	\figrule
\end{figure}
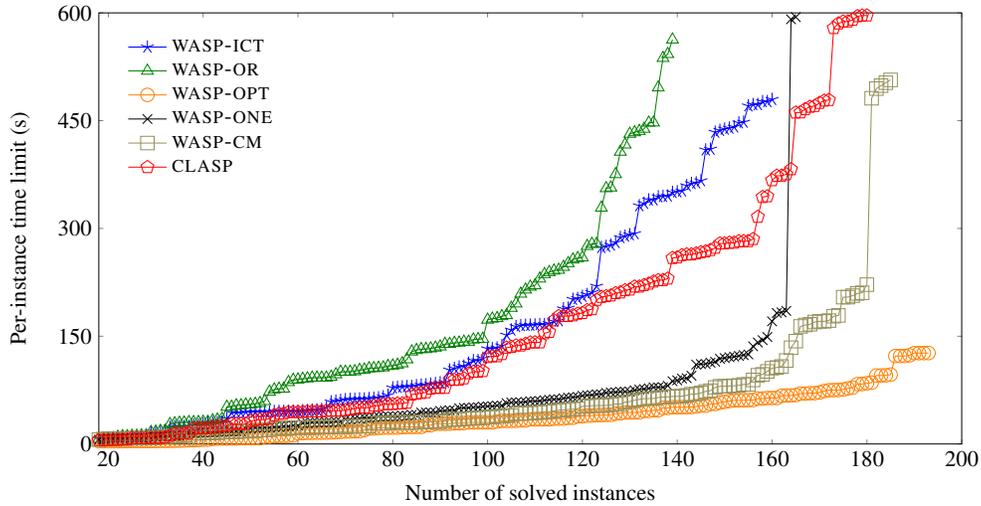
Results are reported in the cactus plot of Figure~\ref{fig:cactus}, showing for each algorithm
 the number of instances solved (x-axis) under different per-instance time limits (y-axis).
As a first observation, \wasp cannot reach the performance of \clasp on the execution of algorithm \textsc{or}, and indeed \clasp solved 41 instances more than \waspor.
However, such a huge gap is completely filled by \waspopt and \waspcm, which actually solve 13 and 5 instances more than \clasp, respectively.
The best performance overall is obtained by \waspopt, which solves all the instances with an average running time of 35.8 seconds.
Interestingly, \waspopt is able to solve 95\% of the tested instances within two minutes, and anyhow all test cases within 150 seconds of computation.

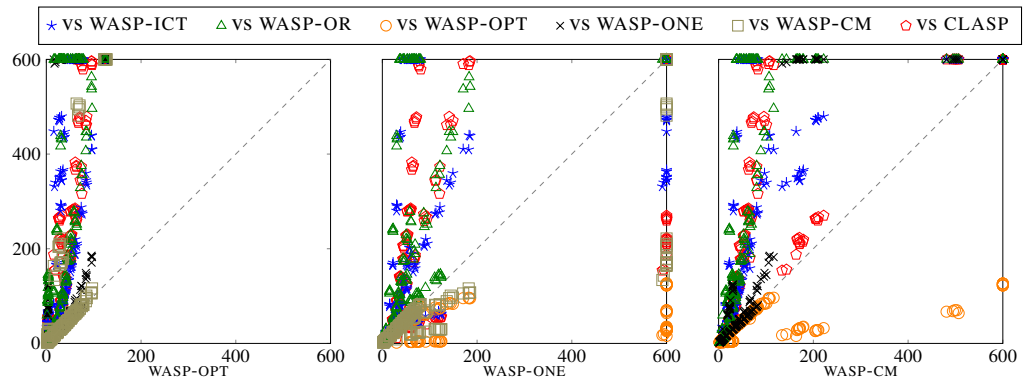
\begin{figure}
    \figrule
    \,\hfill
    \begin{tikzpicture}
        \begin{axis}[
            hide axis,
            width=0.6\textwidth,
            height=0.12\textwidth,
            legend style={at={(0.5,1.3)}, anchor=north, align=left, /tikz/every even column/.append style={column sep=1em}},
            legend columns=-1,
            xmin=0,
            xmax=600,
            ymin=0,
            ymax=600,
        ]      
        \addlegendimage{mark size=2pt, only marks, color=blue, mark=star};
        \addlegendentry{vs \waspict}            
        \addlegendimage{mark size=2pt, only marks, color=green!50!black, mark=triangle};
        \addlegendentry{vs \waspor}
        \addlegendimage{mark size=2pt, only marks, color=orange, mark=o};
        \addlegendentry{vs \waspopt}
        \addlegendimage{mark size=2pt, only marks, color=black, mark=x};
        \addlegendentry{vs \waspone}
        \addlegendimage{mark size=2pt, only marks, color=yellow!50!black, mark=square};
		\addlegendentry{vs \waspcm}            
        \addlegendimage{mark size=2pt, only marks, color=red, mark=pentagon};
        \addlegendentry{vs \clasp}
        \end{axis}
    \end{tikzpicture}%

    \begin{tikzpicture}[scale=0.7]
        \pgfkeys{%
            /pgf/number format/set thousands separator = {}}
        \begin{axis}[
        scale only axis
        , font=\normalsize
        , x label style = {at={(axis description cs:0.5,0.04)}}
        , y label style = {at={(axis description cs:0.05,0.5)}}
        , xlabel={\waspopt}
        , width=.4\textwidth
        , height=.4\textwidth
        , xmin=0, xmax=600
        , ymin=0, ymax=600
        , xtick={0,200,400,600}
        , ytick={0,200,400,600}
        , major tick length=2pt
        ]
        \addplot [mark size=3pt, only marks, color=red, mark=pentagon] [unbounded coords=jump] table[col sep=semicolon, x index=5, y index=1] {./scatter.csv}; 	
        \addlegendentry{vs \clasp}
        \addplot [mark size=3pt, only marks, color=blue, mark=star] [unbounded coords=jump] table[col sep=semicolon, x index=5, y index=2] {./scatter.csv}; 	
        \addlegendentry{vs \waspict}
        \addplot [mark size=3pt, only marks, color=black, mark=x] [unbounded coords=jump] table[col sep=semicolon, x index=5, y index=3] {./scatter.csv}; 
        \addlegendentry{vs \waspone}
        \addplot [mark size=3pt, only marks, color=green!50!black, mark=triangle] [unbounded coords=jump] table[col sep=semicolon, x index=5, y index=4] {./scatter.csv}; 	
        \addlegendentry{vs \waspor}
        \addplot [mark size=3pt, only marks, color=yellow!50!black, mark=square] [unbounded coords=jump] table[col sep=semicolon, x index=5, y index=6] {./scatter.csv}; 	
        \addlegendentry{vs \waspcm}
        \addplot [color=gray, dashed] [unbounded coords=jump] table[col sep=semicolon, x index=0, y index=0] {./scatter.csv}; 
        
        \legend{}
        \end{axis}
    \end{tikzpicture}
    \hfill
    \begin{tikzpicture}[scale=0.7]
        \pgfkeys{%
            /pgf/number format/set thousands separator = {}}
        \begin{axis}[
        scale only axis
        , font=\normalsize
        , x label style = {at={(axis description cs:0.5,0.04)}}
        , y label style = {at={(axis description cs:0.05,0.5)}}
        , xlabel={\waspone}
        , width=.4\textwidth
        , height=.4\textwidth
        , xmin=0, xmax=600
        , ymin=0, ymax=600
        , xtick={0,200,400,600}
        , ytick={-1}
        , major tick length=2pt
        ]
        \addplot [mark size=3pt, only marks, color=red, mark=pentagon] [unbounded coords=jump] table[col sep=semicolon, x index=3, y index=1] {./scatter.csv}; 	
        \addlegendentry{vs \clasp}
        \addplot [mark size=3pt, only marks, color=blue, mark=star] [unbounded coords=jump] table[col sep=semicolon, x index=3, y index=2] {./scatter.csv}; 	
        \addlegendentry{vs \waspict}
        \addplot [mark size=3pt, only marks, color=orange, mark=o] [unbounded coords=jump] table[col sep=semicolon, x index=3, y index=5] {./scatter.csv}; 
        \addlegendentry{vs \waspopt}
        \addplot [mark size=3pt, only marks, color=green!50!black, mark=triangle] [unbounded coords=jump] table[col sep=semicolon, x index=3, y index=4] {./scatter.csv}; 	
        \addlegendentry{vs \waspor}
        \addplot [mark size=3pt, only marks, color=yellow!50!black, mark=square] [unbounded coords=jump] table[col sep=semicolon, x index=3, y index=6] {./scatter.csv}; 	
        \addlegendentry{vs \waspcm}
        \addplot [color=gray, dashed] [unbounded coords=jump] table[col sep=semicolon, x index=0, y index=0] {./scatter.csv}; 
        
        \legend{}
        \end{axis}
    \end{tikzpicture}
    \hfill
    \begin{tikzpicture}[scale=0.7]
        \pgfkeys{%
            /pgf/number format/set thousands separator = {}}
        \begin{axis}[
        scale only axis
        , font=\normalsize
        , x label style = {at={(axis description cs:0.5,0.04)}}
        , y label style = {at={(axis description cs:0.05,0.5)}}
        , xlabel={\waspcm}
        , width=.4\textwidth
        , height=.4\textwidth
        , xmin=0, xmax=600
        , ymin=0, ymax=600
        , xtick={0,200,400,600}
        , ytick={-1}
        , major tick length=2pt
        ]
        \addplot [mark size=3pt, only marks, color=red, mark=pentagon] [unbounded coords=jump] table[col sep=semicolon, x index=6, y index=1] {./scatter.csv}; 	
        \addlegendentry{vs \clasp}
        \addplot [mark size=3pt, only marks, color=blue, mark=star] [unbounded coords=jump] table[col sep=semicolon, x index=6, y index=2] {./scatter.csv}; 	
        \addlegendentry{vs \waspict}
        \addplot [mark size=3pt, only marks, color=orange, mark=o] [unbounded coords=jump] table[col sep=semicolon, x index=6, y index=5] {./scatter.csv}; 
        \addlegendentry{vs \waspopt}
        \addplot [mark size=3pt, only marks, color=green!50!black, mark=triangle] [unbounded coords=jump] table[col sep=semicolon, x index=6, y index=4] {./scatter.csv}; 	
        \addlegendentry{vs \waspor}
        \addplot [mark size=3pt, only marks, color=black, mark=x] [unbounded coords=jump] table[col sep=semicolon, x index=6, y index=3] {./scatter.csv}; 	
        \addlegendentry{vs \waspone}
        \addplot [color=gray, dashed] [unbounded coords=jump] table[col sep=semicolon, x index=0, y index=0] {./scatter.csv}; 
        
        \legend{}
        \end{axis}
    \end{tikzpicture}
    
    \caption{Instance-by-instance comparison on non-ground queries in ASP Competitions.}\label{fig:scatter}
    \figrule
\end{figure}
An instance-by-instance comparison is reported on Figure~\ref{fig:scatter}.
The scatter plots clearly show that the advantage of \textsc{wasp-opt} and \textsc{wasp-cm} is uniform on all test cases, while the result is mixed for \textsc{wasp-one}.
For the tested instances, \waspopt is faster than \waspone because computing cardinality-minimal models is usually harder than computing subset-minimal models;
\waspopt is faster than \waspcm because \waspcm performs some hard stable model searches due to the unsatisfiable core minimization.

\begin{table}[b!]
	\caption{
		Numbers of solved instances and cumulative running time (in seconds; each timeout adds 600 seconds) on the benchmarks with non-ground queries from ASP Competitions.
		Best performance emphasized in bold (within this respect, differences up to 10 seconds are ignored).
	}
	\label{tab:queryanswering}
	\centering
	\tabcolsep=0.040cm
	\begin{tabular}{lrrrrrrrrrrrrr}
		\toprule
		& & \multicolumn{2}{c}{\textbf{\waspor}}	&\multicolumn{2}{c}{\textbf{\waspict}} &\multicolumn{2}{c}{\textbf{\waspopt}} &\multicolumn{2}{c}{\textbf{\waspone}} &\multicolumn{2}{c}{\textbf{\waspcm}} &\multicolumn{2}{c}{\textbf{\clasp}}\\
		\cmidrule{3-4}\cmidrule{5-6}\cmidrule{7-8}\cmidrule{9-10}\cmidrule{11-12}\cmidrule{13-14}
		\textbf{Benchmark} &  \textbf{\#} &	\textbf{sol}.	&	\textbf{sum t}	&	\textbf{sol.}	&	\textbf{sum t}	&	\textbf{sol.}	&	\textbf{sum t}	&	\textbf{sol.}	&	\textbf{sum t}	&	\textbf{sol.}	&	\textbf{sum t}	&	\textbf{sol.}	&	\textbf{sum t}\\
		\cmidrule{1-14}
		CQA-Q3	&	40 & 40	&	4346.7	&	40	&	3384.9	&	40	&	{1276.3}	&	40	&	\textbf{1272.5}	&	40	&	1312.8	&	40	&	4353.9\\
		CQA-Q6	&	40 & 40	&	8410.5	&	40	&	6799.5	&	40	&	\textbf{1956.1}	&	40	&	2948.5	&	40	&	2149.1	&	40	&	8505.0\\
		CQA-Q7	&	40 & 16	&	17636.9	&	20	&	15999.3	&	40	&	\textbf{1680.9}	&	40	&	1701.1	&	40	&	1740.8	&	40	&	8929.5\\
		MCSQ	&	73 & 43	&	20646.6	&	60	&	15875.5	&	\textbf{73}	&	\textbf{1995.3}	&	45	&	20050.1	&	65	&	11006.5	&	60	&	12701.0\\
		\cmidrule{1-14}
		\textbf{Total}	&	193	& 139	&	51040.7	&	160	&	42059.2	&	\textbf{193}	&	\textbf{6908.5}	&	165	&	25972.2	&	185	&	16209.3	&	180	&	34489.4\\
		\bottomrule
	\end{tabular}
\end{table}
Details on solved instances and the cumulative running times for benchmarks including non-ground queries from ASP Competitions are given in Table~\ref{tab:queryanswering}.
As already observed before, the best performance is obtained by \textsc{wasp-opt} and \textsc{wasp-cm}.

\begin{figure}[t]
	\figrule
	\begin{tikzpicture}[scale=0.85]
	\pgfkeys{%
		/pgf/number format/set thousands separator = {}}
	\begin{axis}[
	scale only axis
	, font=\normalsize
	, x label style = {at={(axis description cs:0.5,0.0)}}
	, y label style = {at={(axis description cs:0.0,0.5)}}
	, xlabel={Number of solved instances}
	, ylabel={Per-instance time limit (s)}
	, xmin=0, xmax=55
	, ymin=0, ymax=620
	, legend style={at={(0.12,0.96)},anchor=north, draw=none,fill=none}
	, legend columns=1
	, width=1\textwidth
	, height=0.5\textwidth
	, ytick={0,150,300,450,600}
	, major tick length=2pt
	]
	\addplot [mark size=3pt, color=blue, mark=star] [unbounded coords=jump] table[col sep=semicolon, y index=2] {./cactusaspcomp.csv};
	\addlegendentry{\waspict}
	\addplot [mark size=3pt, color=green!50!black, mark=triangle] [unbounded coords=jump] table[col sep=semicolon, y index=3] {./cactusaspcomp.csv}; 
	\addlegendentry{\waspor}
	
	\addplot [mark size=3pt, color=orange, mark=o] [unbounded coords=jump] table[col sep=semicolon, y index=5] {./cactusaspcomp.csv};
	\addlegendentry{\waspopt}
	
	\addplot [mark size=3pt, color=black, mark=x] [unbounded coords=jump] table[col sep=semicolon, y index=4] {./cactusaspcomp.csv};
	\addlegendentry{\waspone}
	
	\addplot [mark size=3pt, color=yellow!50!black, mark=square] [unbounded coords=jump] table[col sep=semicolon, y index=6] {./cactusaspcomp.csv};
	\addlegendentry{\waspcm}
	
	\addplot [mark size=3pt, color=red, mark=pentagon] [unbounded coords=jump] table[col sep=semicolon, y index=1] {./cactusaspcomp.csv}; 
	\addlegendentry{\clasp}
	\end{axis}
	\end{tikzpicture}	
	\caption{Performance comparison on computation of cautious consequences for \emph{easy} instances of ASP Competitions: number of solved instances within a given per-instance time limit.}\label{fig:cactus:2}
	\figrule
\end{figure}
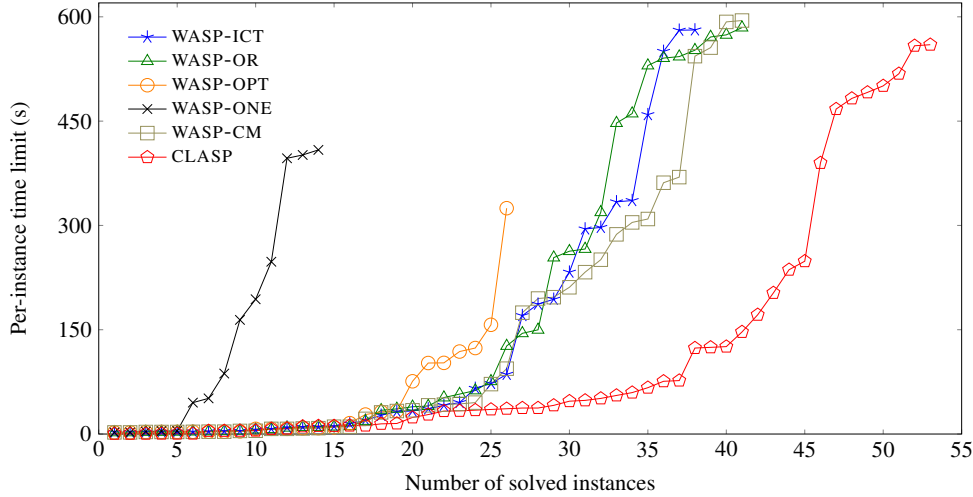

\subsection{Additional experiments}
Additional experiments are obtained from \emph{easy} instances of ASP Competitions, and from abstract argumentation frameworks by adapting an encoding used by \textsc{pyglaf} \cite{DBLP:conf/iclp/Alviano17}.
The performance gain of the new algorithms is less marked on these instances, either because the previous algorithms are already very efficient (abstract argumentation frameworks), or because only few atoms are actually cautious consequences of combinatorial problems in ASP Competitions.
\begin{table}[b!]
	\caption{
		Solved instances and cumulative running time (in seconds; each timeout adds 600 seconds) on computation of cautious consequences for \emph{easy} instances of ASP Competitions.
		Best performance emphasized in bold (within this respect, differences up to 10 seconds are ignored).
	}\label{tab:competition}
	\centering
	\tabcolsep=0.040cm
	\begin{tabular}{lrrrrrrrrrrrrr}
		\toprule
		& & \multicolumn{2}{c}{\textbf{\waspor}}	&\multicolumn{2}{c}{\textbf{\waspict}} &\multicolumn{2}{c}{\textbf{\waspopt}} &\multicolumn{2}{c}{\textbf{\waspone}} &\multicolumn{2}{c}{\textbf{\waspcm}} &\multicolumn{2}{c}{\textbf{\clasp}}\\
		\cmidrule{3-4}\cmidrule{5-6}\cmidrule{7-8}\cmidrule{9-10}\cmidrule{11-12}\cmidrule{13-14}
		\textbf{Benchmark} &  \textbf{\#} &	\textbf{sol}.	&	\textbf{sum t}	&	\textbf{sol.}	&	\textbf{sum t}	&	\textbf{sol.}	&	\textbf{sum t}	&	\textbf{sol.}	&	\textbf{sum t}	&	\textbf{sol.}	&	\textbf{sum t}	&	\textbf{sol.}	&	\textbf{sum t}\\
		\cmidrule{1-14}
		GracefulGraphs	&	1	&	1	&	149.9	&	1	&	85.8	&	1	&	\textbf{32.0}	&	1	&	45.2	&	1	&	45.2	&	1	&	51.2\\
		GraphCol	&	1	&	0	&	600.0	&	0	&	600.0	&	0	&	600.0	&	0	&	600.0	&	0	&	600.0	&	0	&	600.0\\
		IncrSched	&	6	&	4	&	1904.0	&	1	&	3003.4	&	1	&	3015.7	&	1	&	3164.2	&	2	&	2691.6	&	\textbf{5}	&	\textbf{857.0}\\
		KnightTour	&	2	&	0	&	1200.0	&	0	&	1200.0	&	0	&	1200.0	&	0	&	1200.0	&	0	&	1200.0	&	\textbf{2}	&	\textbf{626.1}\\
		Labyrinth	&	32	&	1	&	19184.5	&	0	&	19200.0	&	0	&	19200.0	&	0	&	19200.0	&	0	&	19200.0	&	\textbf{6}	&	\textbf{18377.2}\\
		NoMystery	&	2	&	\textbf{1}	&	\textbf{657.4}	&	1	&	672.7	&	0	&	1200.0	&	1	&	1001.3	&	1	&	694.0	&	1	&	1091.3\\
		PPM	&	15	&	15	&	88.7	&	15	&	94.4	&	15	&	\textbf{75.6}	&	10	&	4350.7	&	15	&	\textbf{80.8}	&	15	&	264.0\\
		QualSpatReas	&	18	&	13	&	6569.2	&	15	&	4893.3	&	7	&	7406.4	&	0	&	10800.0	&	17	&	4537.4	&	\textbf{18}	&	\textbf{1018.6}\\
		Sokoban	&	36	&	\textbf{4}	&	\textbf{20353.6}	&	3	&	20859.8	&	1	&	21102.1	&	1	&	21051.3	&	3	&	20665.2	&	3	&	20529.0\\
		VisitAll	&	2	&	2	&	529.0	&	2	&	364.2	&	1	&	757.1	&	0	&	1200.0	&	2	&	407.8	&	\textbf{2}	&	\textbf{80.2}\\
		\cmidrule{1-14}
		\textbf{Total}	&	115	&	41	&	51236.2	&	38	&	50973.5	&	26	&	54588.9	&	14	&	62612.7	&	41	&	50122.0	&	\textbf{53}	&	\textbf{43494.7}\\
		\bottomrule
	\end{tabular}
\end{table}

Concerning the instances from 7th ASP Competition, the benchmark set comprises instances classified as easy, that is, those for which a stable model is found within 20 seconds of computation by mainstream ASP systems;
these instances were selected for the computationally more demanding task of enumerating all cautious consequences of the programs.
However, here one should note that these programs are no \emph{per se} meant to be queried, and therefore the benchmark set is to be considered a crafted one, in contrast to the benchmarks already reported on, which have a clear semantical interpretation in terms of specific applications.

Results are given in the cactus plot of Figure~\ref{fig:cactus:2} and in Table~\ref{tab:competition}.
On the crafted benchmark set, \waspor and \waspcm performed better than \waspict, \waspone, and \waspopt.
Indeed, the best performance overall was observed for \waspcm, which solved 41 out of 115 instances with an average running time of 139.6 seconds, while \waspor solved the same number of instances but with an higher average running time of 166.7 seconds.
A slightly worse performance was reached by \waspict, which solved only 3 instances less than \waspor and \waspcm, while the algorithms based on \textsc{min}, i.e., \waspone and \waspopt, were less effective, and solved respectively 27 and 15 less instances than \waspcm and \waspor.
Moreover, the gap between \waspor and \clasp is observed also on these instances, with \clasp solving 12 instances more than \waspor.
All in all, the benchmark provides a positive result for \textsc{cm}.

The other benchmark is obtained by considering instances of skeptical acceptance under complete extensions in the context of abstract argumentation frameworks.
The 2nd International Competition on Computational Models of Argumentation (ICCMA'17) provides a broad set of 350 test cases originating from different contexts;
instances can be downloaded from \url{http://argumentationcompetition.org/2017}.
Complete extensions can be computed by adapting the propositional encoding implemented by \textsc{pyglaf} to the language of ASP as follows:
\begin{verbatim}
{in(X)} :- arg(X).                            % guess an extension
:- att(X,Y), in(X), in(Y).                    % conflict-free
attacked(X) :- att(Y,X), in(Y).               % attacked relation
:- att(Y,X), in(X), not attacked(Y).          % admissible
:- arg(X); attacked(Y) : att(Y,X); not in(X). % complete
#show.
#show in/1.
\end{verbatim}
Results are given in the cactus plot of Figure~\ref{fig:cactus:3} and in Table~\ref{tab:argumentation}.
All algorithms behave similarly on these instances, which are easily solved by \textsc{wasp} and \textsc{clasp}.

\begin{figure}[t]
	\figrule
	\begin{tikzpicture}[scale=0.85]
	\pgfkeys{%
		/pgf/number format/set thousands separator = {}}
	\begin{axis}[
	scale only axis
	, font=\normalsize
	, x label style = {at={(axis description cs:0.5,0.0)}}
	, y label style = {at={(axis description cs:0.0,0.5)}}
	, xlabel={Number of solved instances}
	, ylabel={Per-instance time limit (s)}
	, xmin=300, xmax=355
	, ymin=0, ymax=620
	, legend style={at={(0.12,0.96)},anchor=north, draw=none,fill=none}
	, legend columns=1
	, width=1\textwidth
	, height=0.5\textwidth
	, ytick={0,150,300,450,600}
	, major tick length=2pt
	]
	\addplot [mark size=3pt, color=blue, mark=star] [unbounded coords=jump] table[col sep=semicolon, y index=2] {./cactusargumentation.csv};
	\addlegendentry{\waspict}
	\addplot [mark size=3pt, color=green!50!black, mark=triangle] [unbounded coords=jump] table[col sep=semicolon, y index=3] {./cactusargumentation.csv}; 
	\addlegendentry{\waspor}
	
	\addplot [mark size=3pt, color=orange, mark=o] [unbounded coords=jump] table[col sep=semicolon, y index=5] {./cactusargumentation.csv};
	\addlegendentry{\waspopt}
	
	\addplot [mark size=3pt, color=black, mark=x] [unbounded coords=jump] table[col sep=semicolon, y index=4] {./cactusargumentation.csv};
	\addlegendentry{\waspone}
	
	\addplot [mark size=3pt, color=yellow!50!black, mark=square] [unbounded coords=jump] table[col sep=semicolon, y index=6] {./cactusargumentation.csv};
	\addlegendentry{\waspcm}
	
	\addplot [mark size=3pt, color=red, mark=pentagon] [unbounded coords=jump] table[col sep=semicolon, y index=1] {./cactusargumentation.csv}; 
	\addlegendentry{\clasp}			
	\end{axis}
	\end{tikzpicture}	
	\caption{Performance comparison on skeptical acceptance under complete extensions: number of solved instances within a given per-instance time limit.}\label{fig:cactus:3}
	\figrule
\end{figure}
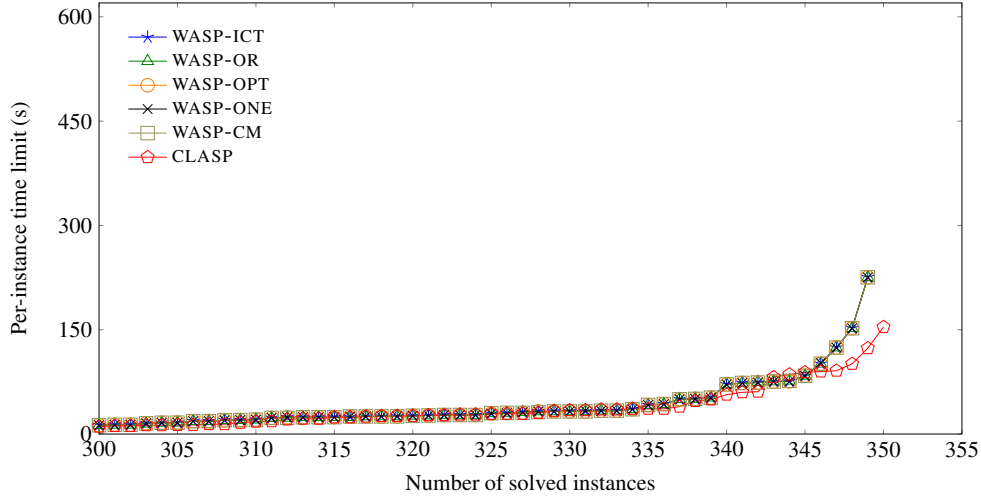

\begin{table}[!b]
	\caption{
		Solved instances and cumulative running time (in seconds; each timeout adds 600 seconds) on skeptical acceptance under complete extensions.
		Best performance emphasized in bold (within this respect, differences up to 10 seconds are ignored).
	}\label{tab:argumentation}
	\centering
	\tabcolsep=0.058cm
	\begin{tabular}{lrrrrrrrrrrrrr}
		\toprule
		& & \multicolumn{2}{c}{\textbf{\waspor}}	&\multicolumn{2}{c}{\textbf{\waspict}} &\multicolumn{2}{c}{\textbf{\waspopt}} &\multicolumn{2}{c}{\textbf{\waspone}} &\multicolumn{2}{c}{\textbf{\waspcm}} &\multicolumn{2}{c}{\textbf{\clasp}}\\
		\cmidrule{3-4}\cmidrule{5-6}\cmidrule{7-8}\cmidrule{9-10}\cmidrule{11-12}\cmidrule{13-14}
		\textbf{Benchmark} &  \textbf{\#} &	\textbf{sol}.	&	\textbf{sum t}	&	\textbf{sol.}	&	\textbf{sum t}	&	\textbf{sol.}	&	\textbf{sum t}	&	\textbf{sol.}	&	\textbf{sum t}	&	\textbf{sol.}	&	\textbf{sum t}	&	\textbf{sol.}	&	\textbf{sum t}\\
		\cmidrule{1-14}
		C -- Dataset 1	& 50 &	50	&	62.5	&	50	&	62.4	&	50	&	62.6	&	50	&	62.4	&	50	&	62.4	&	50	&	\textbf{18.7}\\
		C -- Dataset 2	& 50 &	50	&	553.1	&	50	&	552.8	&	50	&	553.3	&	50	&	553.9	&	50	&	553.4	&	50	&	\textbf{310.3}\\
		C -- Dataset 3	& 100 &	100	&	\textbf{628.6}	&	100	&	\textbf{629.1}	&	100	&	\textbf{628.5}	&	100	&	\textbf{630.4}	&	100	&	\textbf{627.6}	&	100	&	650.7\\
		C -- Dataset 4	& 150 &	149	&	1788.1	&	149	&	1785.8	&	149	&	1785.7	&	149	&	1786.6	&	149	&	1787.1	&	\textbf{150}	&	\textbf{1184.1}\\
		\cmidrule{1-14}
		\textbf{Total}	& 350 &	349	&	3032.3	&	349	&	3030.1	&	349	&	3030.1	&	349	&	3033.3	&	349	&	3030.5	&	\textbf{350}	&	\textbf{2163.8}\\
		\bottomrule
	\end{tabular}
\end{table}

\section{Related work}
\label{sec:related}

Cautious reasoning over answer set programs is the underlying computational task of several practical applications, among them Multi Context Systems Querying (MCSQ; \citeNP{DBLP:conf/aaai/BrewkaE07}) and Consistent Query Answering (CQA; \citeNP{DBLP:journals/tplp/ArenasBC03}).
MCSQ allows to query heterogeneous knowledge bases, called contexts, linked together by bridge rules modeling the flow of information among contexts.
CQA, instead, allows to query inconsistent databases by taking advantage of the notion of repair, that is, a maximal and consistent revision of the database \cite{DBLP:journals/tplp/ArenasBC03,DBLP:journals/tplp/MannaRT13,DBLP:journals/tplp/MannaRT15}.
Both benchmarks were used in Section~\ref{sec:exp}.

In ASP, cautious reasoning has been previously  implemented in
\textsc{dlv} \cite{DBLP:journals/tocl/LeonePFEGPS06,DBLP:conf/datalog/AlvianoFLPPT10,DBLP:conf/lpnmr/AlvianoCDFLPRVZ17}, 
\clasp \cite{DBLP:journals/ai/GebserKS12}, and 
\wasp \cite{DBLP:conf/lpnmr/AlvianoDLR15}.
These systems implement dedicated techniques built on 
top of their search procedures for computing stable models. The algorithm implemented in 
\textsc{dlv} is based on the enumeration of all stable models for computing their intersection.
\clasp instead implements \textsc{or}, and \wasp implements both \textsc{or} and \textsc{ict} \cite{DBLP:journals/tplp/AlvianoDR14}.
An abstract
framework of cautious reasoning, capturing the above-mentioned algorithms,
has been presented 
by \citeANP{DBLP:conf/iclp/BrocheninM15}~\citeyear{DBLP:conf/iclp/BrocheninM15}. 

The task of computing the set of cautious consequences of a program $\Pi$ is closely
related to computing the backbone of a propositional theory
\cite{DBLP:journals/aicom/JanotaLM15,DBLP:conf/hldvt/ZhuWSM11}. Some of the
algorithms presented  are inspired by previous work in the
context of propositional logic. Algorithm \textsc{or} is based on a strategy for computing
the backbone of a conjunctive normal formal propositional formula~\cite{DBLP:conf/hldvt/ZhuWSM11}, and extended with some
optimization techniques by \citeANP{DBLP:journals/aicom/JanotaLM15}~\citeyear{DBLP:journals/aicom/JanotaLM15}. Algorithm \textsc{ict} is a
reimplementation of Algorithm~3 from \citeANP{DBLP:journals/aicom/JanotaLM15}~\citeyear{DBLP:journals/aicom/JanotaLM15}.

Algorithm \textsc{min} captures a family of strategies based on the computation of stable 
models that are subset-minimal with respect to a set of atoms. In particular, the approach to computing 
subset-minimal stable models adapts the 
\textsc{optsat} algorithm \cite{DBLP:journals/constraints/RosaGM10,DBLP:journals/amai/GiunchigliaLM08} which has
been previously employed 
for the computation of optimal stable models \cite{Alviano01092015} in the context of ASP, 
as well as for computing the ideal semantics in the context of abstract 
argumentation \cite{PrevitiSAC18}; and another approach is based
on algorithm \textsc{one}, which was introduced for the first time as a MaxSAT 
algorithm \cite{DBLP:conf/ijcai/AlvianoDR15}. Stable models that are 
subset-minimal with respect to a set of atoms have also been previously employed 
in the computation of 
paracoherent answer sets \cite{DBLP:conf/aaai/AmendolaDFLR17}, and for 
circumscription \cite{DBLP:journals/tplp/Alviano17}. Other strategies for the 
computation of minimal models have been previously proposed in early work
\cite{DBLP:conf/tableaux/Niemela96,DBLP:conf/cade/HasegawaFK00,DBLP:journals/jar/BryY00}
preceding the invention of conflict-driven solvers;
this limitation was overcome by \citeANP{Koshimura09}~\citeyear{Koshimura09}. 
Going beyond the above, 
general approaches such as the 
algorithms for computing minimal sets over monotone 
predicates (MSMP; \citeNP{DBLP:journals/ai/JanotaM16}) could also be adapted for 
computing minimal models. 

Although \textsc{cm} is a novel algorithm, it shares some ideas with 
Algorithm~6 by \citeANP{DBLP:journals/aicom/JanotaLM15}~\citeyear{DBLP:journals/aicom/JanotaLM15}. In particular, both algorithms try 
to flip all candidates at once. In case a core of size one is returned, both algorithms add 
the single atom (literal) to the set of cautious consequences (backbone). However, 
the two algorithms differ in their behavior when a core with size greater than one is returned. The 
algorithm presented by \citeANP{DBLP:journals/aicom/JanotaLM15} simply stores the 
literals in the core in a new set, and before termination processes them with an 
alternative algorithm.
Instead, \textsc{cm} starts a minimization procedure of the returned core.
If the size of the reduced core is one, then a cautious consequence has been found.
Otherwise, during the minimization, the computed stable model is used to improve the overestimation. 
Moreover, algorithm \textsc{cm} is also \textit{anytime} by itself, since it is able to improve the 
underestimation during its computation, while algorithm \textsc{min} can be made anytime using 
the technique suggested by \citeANP{DBLP:journals/tplp/AlvianoDR14}~\citeyear{DBLP:journals/tplp/AlvianoDR14}.

As a final remark, all considered algorithms work on ground programs.
It turns out that they can be used as a back end of query optimization techniques that work at the symbolic level, as for example magic sets \cite{DBLP:journals/aicom/AlvianoF11,DBLP:journals/ai/AlvianoFGL12}.

\section{Conclusion}

Cautious reasoning over answer set programs is a central query answering task supported by ASP systems and is motivated by various applications.
The main contributions of this paper are two new algorithms for cautious reasoning in ASP, their implementation, and empirical evaluation against earlier proposed approaches to cautious reasoning in ASP.
Actually, one of the new algorithms make use of minimal models, and can be instantiated with several procedures, two of them considered in this paper.
The other algorithm harnesses the ability of ASP solvers to extract unsatisfiable 
cores for expediting the elimination of candidate atoms that are not cautious consequences. 
The empirical results show that the proposed algorithms improve the current state of the art 
in cautious reasoning on standard benchmarks with non-ground queries from the latest ASP competitions.

\section{Acknowledgments}

Mario Alviano was partially supported by the POR CALABRIA FESR 2014-2020 project ``DLV Large Scale'' (CUP J28C17000220006), by the EU H2020 PON I\&C 2014-2020 project ``S2BDW'' (CUP B28I17000250008), and by Gruppo Nazionale per il Calcolo Scientifico (GNCS-INdAM). 
Carmine Dodaro and Marco Maratea were  partially  supported  by  project  GESTEC - Service
Oriented  Technologies  for  Integrated  ICT  platforms 
(Italian Ministry for Education Research and University, DM 64565).
Matti J\"arvisalo and Alessandro Previti were supported  by Academy of Finland (grants 276412 and 312662).

\bibliographystyle{acmtrans}

\label{lastpage}

\end{document}